\newcommand{\RETURN}{{\textbf{return} }}
\newcommand{\vars}{{\tt vars}}
\newcommand{\lldots}{,\ldots ,}
\newcommand{\Poly}{{\tt P}}
\newcommand{\A}{{\tt A}}
\def\QED{$\square$\medskip}
\newtheorem{rem}{Remark}
\begin{document}

\title{A Polynomial Time Delta-Decomposition Algorithm for Positive DNFs \thanks{This work was supported by the grant of Russian Foundation for Basic Research No. 17-51-45125 and by the Ministry of Science and Education of the Russian Federation under the 5-100 Excellence Program.}}

\author{Denis~Ponomaryov}

\institute{Ershov Institute of Informatics Systems, Novosibirsk State University\newline
\email{ponom@iis.nsk.su}}







\maketitle

\begin{abstract}
We consider the problem of decomposing a positive DNF into a conjunction of DNFs, which may share a (possibly empty) given set of variables $\Delta$. This problem has interesting connections with traditional applications of positive DNFs, e.g., in game theory, and with the broad topic of minimization of boolean functions. We show that the finest $\Delta$-decomposition components of a positive DNF can be computed in polynomial time and provide a decomposition algorithm based on factorization of multilinear boolean polynomials.
\end{abstract}

\section{Introduction}
The interest in decomposition of positive DNFs stems from computationally hard problems in game theory, reliability theory, the theory of hypergraphs and set systems. A survey of relevant literature can be found in \cite{Bioch-2010}. In the context of voting games, boolean variables are used to represent voters and the terms of a positive DNF correspond to the winning coalitions, i.e., the groups of voters, who, when simultaneously voting in favor of an issue, have the power to determine the outcome of the vote (i.e., in this case the DNF is evaluated as true). Dual to them are blocking coalitions, i.e., those that force the outcome of the vote to be negative, irrespective of the decisions made by the remaining voters. The problem to find blocking coalitions with a minimal number of voters is easily shown to be equivalent to the hitting set problem, which is NP-complete.  Decomposing a DNF into components allows for reducing this problem to inputs having fewer variables. 

In this paper, we consider decomposition of a positive DNF into a \emph{conjunction of DNFs} sharing a given (possibly empty) subset of variables $\Delta$, with the remaining subsets of variables being disjoint (in this case we say that a DNF is $\Delta$-decomposable). In particular, each of the components has fewer variables than the original formula. Besides dimensionality reduction, decomposition of this kind facilitates finding a more compact representation of a positive boolean function. For example, the following DNF can be represented as a conjunction of two formulas:
\begin{equation}\label{Eq:DisjointDecompExample}
xa \vee xb \vee ya \vee yb \equiv (x\vee y) \ (a\vee b)
\end{equation}
i.e., it is $\varnothing$-decomposable into the components $x\vee y$ and $a \vee b$. The following DNF is not $\varnothing$-decompo\-sable, but it is $\{d_1,d_2\}$-decomposable:
\begin{equation}\label{Eq:ToExplainAlgo}
 xad_1 \vee xbd_1d_2 \vee yad_1d_2 \vee ybd_2 \equiv (xd_1 \vee yd_2) \ (ad_1 \vee bd_2)
\end{equation}
which can be easily verified by converting the expression into DNF.

In other words, $\Delta$-decomposition allows for common $\Delta$-variables between the components and partitions the remaining variables of a formula. Decomposition into variable disjoint components (i.e., $\Delta$-decomposition for $\Delta=\varnothing$) is known as disjoint conjunctive decomposition, or simply as AND-decomposition. The notion of OR-decomposition is defined similarly. 

The minimization of positive DNFs via decomposition in the sense above is related to open questions, not sufficiently addressed in the previous literature. For example, there is a fundamental work by Brayton et al. on the multilevel synthesis \cite{Brayton-2010}, which provides minimization methods with heuristics working well for arbitrary boolean functions. However, this contribution leaves space for research on minimization for special classes of functions, where the problem is potentially simpler. This is evidenced by the research in \cite{Boros} and \cite{PhDGursky}, for example.

It has been observed that the quality of multilevel decomposition (i.e., alternating AND/OR--decomposition) of DNFs strongly depends on the kind of decomposition used at the topmost level. As a rule, OR--decomposition has a priority over AND-decomposition in applications, since it is computationally trivial (while AND-decomposition is considered to be hard and no specialized algorithms for DNFs are known). However, choosing AND-decomposition at the topmost level may provide a more compact representation of a boolean function. For example, application of ``AND--first'' strategy gives a representation of the following positive DNF
\begin{eqnarray*}
absu \vee absv \vee absw \vee abtu \vee abtv \vee abtw \vee abxy \vee abxz \vee \\
  acsu \vee acsv \vee acsw \vee actu \vee actv \vee actw \vee acxy \vee acxz \vee  \\
  desu \vee desv \vee desw \vee detu \vee detv \vee detw \vee dexy \vee dexz\quad
\end{eqnarray*}
in the form
%
$(ab \vee ac \vee de) \  (su \vee sv \vee sw \vee tu \vee tv \vee tw \vee xy \vee xz).$
%
Further, OR--decomposition of the second component gives 
$\ {su \vee sv \vee sw \vee tu \vee tv \vee tw}$ and  ${xy \vee xz}$
(the first component similarly OR-decomposes syntactically). Finally, AND-decomposition of the obtained formulas gives a representation
\[(a(b \vee c) \vee de)((s \vee t)(u \vee v \vee w) \vee x(y \vee z),
\]
which is a read-once formula of depth 4 having 13 occurrences of variables.

On the contrary, Espresso\footnote{a well--known heuristic optimizer based on the work of Brayton et al., which is often used as a reference tool 
for optimization of boolean functions}, which implements OR--decomposition at the topmost level, gives a longer expression:
\[
x (a (c  \vee  b)  \vee  de) (z  \vee  y)  \vee  (a (c  \vee  b)  \vee  de) (t  \vee  s) (w  \vee  v  \vee  u)
\]

which is a formula of depth 5 having 18 occurrences of variables and this formula is not read-once.

Bioch \cite{Bioch-2010} studied (variable disjoint) decompositions of positive boolean functions in the form $\varphi \equiv F(G(X_A), X_B)$, where $\{X_A, X_B\}$ is a partition of the variables of $\varphi$ and $F,G$ are some (positive) boolean functions. The set $X_A$ is called a modular set of $\varphi$ in this case. By taking $\varphi=x_1 \vee \ldots \vee x_n$, one can see that the number of modular sets of $\varphi$ is exponential in $n$ (since any subset of the variables is modular). Bioch showed that one can compute a tree in time polynomial in the size of an input positive DNF, which succinctly represents all its modular sets. Given a modular set $X_A$, the corresponding component $G(X_A)$ can be also computed in polynomial time. These important results leave the question open however, which modular sets one should choose, when trying to find a compact representation of a boolean formula. For example, the modular tree for the DNF $\varphi$ from equation (\ref{Eq:DisjointDecompExample}) consists of the singleton variable subsets (plus the set of all the variables of $\varphi$ being modular by definition), from which one can obtain representations of the form $\varphi \ \equiv \ x(a\vee b) \vee ya \vee yb$ (and similarly for $y$ and $a,b$ selected). On the other hand, $\varnothing$-decomposition of this formula gives the representation $\varphi\equiv (x\vee y)(a\vee b)$, which is more compact. 

It has been shown in \cite{EmelyanovPonomaryov-2015pcs} that computing $\varnothing$-decomposition of a boolean function is coNP-hard in general, but for a positive DNF it can be computed in time polynomial in the size of the input formula (given as a string). In fact, it has been proved that $\varnothing$-decomposition reduces to factorization of a multilinear boolean polynomial efficiently obtained from the input positive DNF. For the latter problem the authors have provided a polynomial time algorithm based on the computation of formal derivatives. In this paper, we generalize the results from \cite{EmelyanovPonomaryov-2015pcs}. First, we provide an algorithm, which computes the finest $\varnothing$-decomposition components of a positive DNF. Since $\varnothing$-decomposable functions are expected to be rare, we consider the more general notion of $\Delta$-decomposition. The problem of computing $\Delta$-decomposition (for an arbitrary given $\Delta$) can be reduced to $\varnothing$-decomposition: it suffices to test whether each of the (exponentially many) DNFs, obtained from the input one for (all possible) evaluations of $\Delta$-variables, is $\varnothing$-decomposable with the same variable partition. The reduction holds for arbitrary boolean functions, not necessarily positive ones. We show however that for positive DNFs, it suffices to make $\varnothing$-decomposition tests only for a polynomial number of (positive) DNFs obtained from the input one. As a result, we obtain a polynomial time algorithm, which computes the finest $\Delta$-decomposition components of a positive DNF for a subset of variables $\Delta$. 


\section{Preliminaries}\label{Sect:BooleanExpressions}

A boolean expression is a combination of constants $0, 1$ and boolean variables using conjunction, disjunction, and negation. 
A boolean expression is a \emph{DNF} if it is a disjunction of \emph{terms} (where each term is a conjunction of literals and constants). We assume there are no double negations of variables in boolean expressions, no double occurrences of the same term in a DNF, or of the same literal in a term of a DNF. A DNF is \emph{positive} if it does not contain negated variables and the constant $0$. For a set of variables $V$, a \emph{$V$-literal} is a literal with a variable from $V$. We use the notation $\vars(\varphi)$ for the set of variables of an expression $\varphi$. For boolean expressions $\varphi$ and $\psi$, we write $\varphi\equiv\psi$ if they are logically equivalent.

\begin{definition}[$\Delta$-decomposability]
Let $\varphi$ be a boolean expression and $\Delta\subseteq\vars(\varphi)$ a subset of variables. The expression $\varphi$ is \emph{$\Delta$-decomposable} if it is equivalent to the conjunction of boolean expressions $\psi_1\lldots\psi_n$, where $n\geqslant 2$, such that the following holds:
\begin{itemize}
\item[a.] $\bigcup_{i=1\lldots n}\vars(\psi_i)\subseteq\vars(\varphi)$;
\item[b.] $\vars(\psi_i)\cap\vars(\psi_j)\subseteq\Delta$, for all $1\leqslant i,j \leqslant n$, $i\neq j$;
\item[c.] $\vars(\psi_i)\setminus\Delta\neq\varnothing$, for $i=1\lldots n$.
\end{itemize}
The expressions $\psi_1\lldots\psi_n$ are called \emph{($\Delta$-)decomposition components} of $\varphi$. If $\Delta=\varnothing$ and the above holds then we call $\varphi$ \emph{decomposable}, for short.
\end{definition}

Clearly, $\Delta$-decompo\-sition components can be subject to a more fine-grained decomposition, wrt the same or different delta's. It immediately follows from this definition that a boolean expression, which contains at most one non-$\Delta$-variable, is not $\Delta$-decomposable. Observe that conditions $a,c$ in the definition are important: if any of them is omitted then every boolean expression turns out to be $\Delta$--decomposable, for any (proper) subset of variables $\Delta$.

\begin{definition}[Finest Variable Partition wrt $\Delta$]
Let $\varphi$ be a boolean expression, $\Delta\subseteq\vars(\varphi)$ a subset, and $\pi=\{V_1\lldots V_{|\pi|}\}$ a partition of $\vars(\varphi)\setminus\Delta$. The expression $\varphi$ is said to be \emph{$\Delta$-decomposable with partition} $\pi$ if it has $\Delta$-decomposition components $\psi_i$, for $i=1\lldots |\pi|$, such that $\vars(\psi_i)=V_i\cup\Delta$. 

The \emph{finest variable partition} of $\varphi$ (wrt $\Delta$) is $\{\vars(\varphi)\}$ if $\varphi$ is not $\Delta$-decomposable. Otherwise it is the partition $\pi$, which corresponds to the non-$\Delta$-decomposable components $\psi_i$ of $\varphi$. 
\end{definition}

It will be clear from the results of this paper that for any $\Delta\subseteq\vars(\varphi)$, the finest variable partition of a positive DNF $\varphi$ is unique. In the general case, (e.g., for arbitrary boolean expressions), this property follows from the result proved in \cite{Ponomaryov-2008} for a broad class of logical calculi including propositional logic.  
As will be shown below, once the finest variable partition of a DNF $\varphi$ is computed, the corresponding (non-decomposable) components of $\varphi$ are easily obtained.

\medskip

Throughout the text, we use the term \emph{assignment} as a synonym for a consistent set of literals. Given a set of variables $V=\{x_1\lldots x_n\}$, where $n\geqslant 1$, a \emph{$V$-assignment} is a set of literals $\{l_1\lldots l_n\}$, where $l_i$ is a literal over variable $x_i$, for $i=1\lldots n$. 
Let $\varphi$ be a DNF, $V\subseteq\vars(\varphi)$ a subset, and $X$ a $V$-assignment such that there is a term in $\varphi$, whose set of $V$-literals is contained in $X$. Then the \emph{substitution} of $\varphi$ with $X$, denoted by $\varphi[X]$, is a DNF defined as follows:
\begin{itemize}
\item if there is a term $t$ in $\varphi$ such that every literal from $t$ is contained in $X$ then $\varphi[X]=1$ (and $X$ is called  \emph{satisfying assignment} for $\varphi$, notation: $X\models\varphi$);
\item otherwise $\varphi[X]$ is the DNF obtained from $\varphi$ by removing terms, whose set of $V$-literals is not contained in $X$, and by removing $V$-literals in the remaining terms.
\end{itemize}

For example, for the positive DNF $\varphi$ from (\ref{Eq:ToExplainAlgo}) we have $\varphi[\{d_1, \neg d_2\}]= xa$. 


\noindent For DNFs $\varphi$ and $\psi$, we say that $\varphi$ \emph{implies} $\psi$ if for any assignment $X$, it holds $X\not\models\varphi$ or $X\models\psi$. 
For a set of variables $V$ and a DNF $\varphi$, a \emph{projection} of $\varphi$ \emph{onto} $V$, denoted as $\varphi|_V$, is the DNF obtained from $\varphi$ as follows. If there is a term in $\varphi$ which does not contain a variable from $V$ then $\varphi|_V=1$. Otherwise $\varphi|_V$ is the DNF obtained from $\varphi$ by removing literals with a variable not from $V$ in the terms of $\varphi$. It should be clear from this definition that $\varphi$ implies $\varphi|_V$ (in the literature, projection is also known as a \emph{uniform interpolant} or the \emph{strongest consequence} of $\varphi$ wrt $V$). 
For instance, for the DNF $\varphi$ from (\ref{Eq:ToExplainAlgo}) we have $\varphi |_{\{x,y,d_1,d_2\}}= xd_1\vee xd_1d_2 \vee yd_1d_2 \vee yd_2$. 

\begin{lemma}[Decomposition Components as Projections]\label{Lem:ComponentsAsProjections}
Let $\varphi$ be a DNF, which is $\Delta$-decomposable with a variable partition $\pi=\{V_1\lldots V_n\}$. Then $\varphi |_{U_1},$ $.. ,\varphi |_{U_n}$, where $U_i=V_i\cup\Delta$, $i=1\lldots n$, are $\Delta$-decomposition components of $\varphi$.
\end{lemma}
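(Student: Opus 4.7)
The plan is to leverage the characterization of $\varphi|_V$ as the strongest consequence of $\varphi$ in the language of $V$, recalled just above the statement. Fix boolean expressions $\psi_1, \ldots, \psi_n$ witnessing that $\varphi$ is $\Delta$-decomposable with partition $\pi$, so $\varphi \equiv \psi_1 \wedge \cdots \wedge \psi_n$ and $\vars(\psi_i) = U_i$. The aim is to argue that the concrete DNFs $\varphi|_{U_i}$ can play the role of the $\psi_i$.

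The central step is the equivalence $\varphi \equiv \bigwedge_{i=1}^{n}\varphi|_{U_i}$. One direction is immediate, since $\varphi$ implies each of its projections and hence their conjunction. For the converse, $\psi_i$ is a consequence of $\varphi$ with variables in $U_i$, so by the strongest-consequence property $\varphi|_{U_i}$ implies $\psi_i$; conjoining over $i$ yields that $\bigwedge_i \varphi|_{U_i}$ implies $\bigwedge_i \psi_i \equiv \varphi$.

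It then remains to verify conditions (a)--(c) of $\Delta$-decomposability for the family $\{\varphi|_{U_1}, \ldots, \varphi|_{U_n}\}$. Condition (a) is immediate from $\vars(\varphi|_{U_i}) \subseteq U_i \subseteq \vars(\varphi)$. Condition (b) follows from $U_i \cap U_j = (V_i \cap V_j) \cup \Delta = \Delta$ for $i \neq j$, as $\pi$ partitions $\vars(\varphi) \setminus \Delta$. For (c), unfolding the definition of projection gives $\vars(\varphi|_{U_i}) = \vars(\varphi) \cap U_i \supseteq V_i$ as soon as $\varphi|_{U_i}$ is not the constant $1$, whence $\vars(\varphi|_{U_i}) \setminus \Delta \supseteq V_i \neq \varnothing$. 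The only delicate point---the main obstacle---is to exclude the degenerate case $\varphi|_{U_i} = 1$: by the syntactic definition of projection this would require a term of $\varphi$ with no $U_i$-variable, which in turn would force $\psi_i$ to be satisfied on every $U_i$-assignment and thereby collapse it to a tautology, contradicting that $\pi$ is the partition of a genuine decomposition with non-empty $V_i$.
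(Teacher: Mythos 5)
Your argument for the central equivalence $\varphi \equiv \bigwedge_{i}\varphi|_{U_i}$ is exactly the paper's: one direction because $\varphi$ implies its projections, the other because the projection, being the strongest $U_i$-consequence of $\varphi$, implies the witness component $\psi_i$. (The paper proves this implication by hand via an assignment-extension argument rather than citing the strongest-consequence property, but that is the same idea.) Where you go beyond the paper is in checking conditions (a)--(c) of the definition, which the paper's proof silently omits; that extra care is welcome, but the one step you yourself flag as delicate does not actually close. If some term of $\varphi$ contains no $U_i$-variable, you correctly deduce that $\psi_i$ must be a tautology, but a tautology can still \emph{syntactically} mention variables of $V_i$ (e.g., $x\vee\neg x$), and the definition of $\Delta$-decomposability constrains only the syntactic sets $\vars(\psi_i)$ -- so no contradiction arises. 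Indeed, $\varphi = y \vee xy$ is $\varnothing$-decomposable with partition $\{\{x\},\{y\}\}$ via $(x\vee\neg x)\wedge y$, yet $\varphi|_{\{x\}}=1$, so the lemma as literally stated fails there; the paper avoids the issue only because its algorithms strip redundant terms and inessential variables before this lemma is ever applied. So: same route as the paper on the substance, with an added verification whose degenerate case cannot be excluded from the stated hypotheses alone.
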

\begin{proof}
Since $\varphi$ implies $\varphi |_{U_i}$, for $i=1\lldots n$, it suffices to demonstrate that $\bigwedge_{i=1\lldots n}\varphi |_{U_i}$ implies $\varphi$. Assume $\varphi\equiv \varphi_1\wedge\ldots\wedge\varphi_n$, where $\varphi_1\lldots\varphi_n$ are $\Delta$-decomposition components of $\varphi$, with $\vars(\varphi_i)=U_i$, for $i=1 \lldots n$. We show that $\varphi |_{U_i}$ implies $\varphi_i$, for all $i=1 \lldots n$.
For suppose there is a $U_i$-assignment $X$ such that $X\models \varphi |_{U_i}$ and $X\not\models\varphi_i$, for some $i\in\{1\lldots n\}$. Then by the definition of the projection $\varphi |_{U_i}$ there exists an assignment $X'\supseteq X$ such that $X'\models\varphi$ and $X'\not\models\varphi_i$, which is a contradiction, since $\varphi$ implies $\varphi_i$, for $i=1\lldots n$. As $\varphi_1\wedge\ldots\wedge\varphi_n$ implies $\varphi$, we conclude that $\bigwedge_{i=1\lldots n}\varphi |_{U_i}$ implies $\varphi$.  \QED
\end{proof}


Let $V$ be a set of variables, $\Delta\subseteq V$ a subset, and $\A$ a set of $V$-assignments. By $\A|_\Delta$ we denote the set of all $\Delta$-assignments $d$, for which there is an assignment $X\in\A$ such that $d\subseteq X$. For a $\Delta$-assignment $d$, the notation $\A\langle d\rangle$ stands for the set of $(V\setminus\Delta)$-assignments $X$ such that $X\cup d\in\A$. 
Let $V_1, V_2$ be disjoint sets of variables and for $i=1,2$, let $\A_i$ be a set of $V_i$-assignments. Then the notation $\A_1\bowtie\A_2$ stands for the set of all assignments $X_1\cup X_2$ such that $X_1\in\A_1$ and $X_2\in\A_2$. 
The intuitive relationship between the cartesian combinations of assignments is illustrated by the following remark and is put formally in the subsequent Lemma \ref{Lem:DecompositionCriterion}. 

\begin{rem}[Conjunction of DNFs is Similar to Cartesian Product]\label{Re_ConjunctionOfDNFs} \text{}\newline
Taking the conjunction of DNFs $\xi_1\vee\ldots \vee\xi_m$ and $\zeta_1\vee \ldots \vee\zeta_n$ gives a DNF, which has the form $\bigvee(\xi_i\wedge\zeta_j),$ for all pairs $i,j$, with $1\leqslant i \leqslant m$, $1\leqslant j \leqslant n$.
\end{rem}

\begin{lemma}[Decomposability Criterion]\label{Lem:DecompositionCriterion}
Let $\varphi$ be a DNF and $\A$ the set of satisfying assignments for $\varphi$. Then $\varphi$ is $\Delta$-decomposable with a partition $\pi=\{V_1\lldots V_{|\pi|}\}$ iff for all $d\in\A|_\Delta$ it holds that $\A\langle d \rangle = \A\langle d \rangle|_{V_1}\bowtie\ldots\bowtie \A\langle d \rangle|_{V_{|\pi|}}$.
\end{lemma}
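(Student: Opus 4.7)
The plan is to prove the iff directly from the definitions of $\A|_\Delta$, $\A\langle d\rangle$, and $\bowtie$, without invoking Lemma~\ref{Lem:ComponentsAsProjections}. Write $n=|\pi|$ for brevity.

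For the forward direction, assume $\varphi\equiv\psi_1\wedge\ldots\wedge\psi_n$ with $\vars(\psi_i)=V_i\cup\Delta$, fix $d\in\A|_\Delta$, and show both inclusions. The inclusion $\A\langle d\rangle\subseteq\A\langle d\rangle|_{V_1}\bowtie\ldots\bowtie\A\langle d\rangle|_{V_n}$ is immediate because $\{V_i\}$ partitions $\vars(\varphi)\setminus\Delta$, so every $X\in\A\langle d\rangle$ is the disjoint union of its $V_i$-parts, each lying in $\A\langle d\rangle|_{V_i}$ by definition. For the reverse inclusion I would take $X=X_1\cup\ldots\cup X_n$ with each $X_i\in\A\langle d\rangle|_{V_i}$, pick witnesses $Y^{(i)}\in\A\langle d\rangle$ with $Y^{(i)}|_{V_i}=X_i$, and exploit decomposability: since $Y^{(i)}\cup d\models\varphi$ implies $Y^{(i)}\cup d\models\psi_i$, and $\vars(\psi_i)\subseteq V_i\cup\Delta$, the truth of $\psi_i$ at $Y^{(i)}\cup d$ depends only on $X_i\cup d$. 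Thus $X_i\cup d\models\psi_i$ for every $i$, and patching gives $X\cup d\models\bigwedge_i\psi_i\equiv\varphi$.

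For the backward direction I would construct the components explicitly from $\A$. For each $i$ put $\mathcal{B}_i=\{X\cup d : d\in\A|_\Delta,\ X\in\A\langle d\rangle|_{V_i}\}$, and let $\psi_i$ be a DNF over $V_i\cup\Delta$ whose set of satisfying assignments equals $\mathcal{B}_i$ --- concretely, one can list every element of $\mathcal{B}_i$ as a full term. Each such term uses every variable of $V_i\cup\Delta$, so $\vars(\psi_i)=V_i\cup\Delta$ (assuming $\varphi$ is satisfiable; the unsatisfiable case is trivial). The implication $\varphi\Rightarrow\bigwedge_i\psi_i$ is then immediate from the definition of $\mathcal{B}_i$: if $Y\cup d\models\varphi$, then each $Y|_{V_i}\cup d$ is a term of $\psi_i$. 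For the converse, assume $Z\models\bigwedge_i\psi_i$; each $\psi_i$ contributes a term $X_i\cup d_i\subseteq Z$ from $\mathcal{B}_i$, which forces $d_i=Z|_\Delta$ (so the same $d$ is extracted from every $\psi_i$) and $Z|_{V_i}\in\A\langle d\rangle|_{V_i}$. The assumed product identity then yields $Z|_{\vars(\varphi)\setminus\Delta}\in\A\langle d\rangle$, so $Z\models\varphi$.

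The step I expect to be the main obstacle is the bookkeeping in the backward direction: one must verify that the $\psi_i$ assembled from the $\mathcal{B}_i$'s genuinely form a $\Delta$-decomposition (in particular the property $\vars(\psi_i)\setminus\Delta\neq\varnothing$ relies on the non-emptiness argument above) and, more importantly, one must confirm that the $\Delta$-parts of the witness terms coming from different $\psi_i$'s are forced to agree. This coherence is what justifies applying the hypothesis for a single $d$ rather than for $n$ unrelated $\Delta$-assignments. Once those points are dispatched, the heart of the argument is a single direct appeal to the product identity $\A\langle d\rangle=\A\langle d\rangle|_{V_1}\bowtie\ldots\bowtie\A\langle d\rangle|_{V_n}$.
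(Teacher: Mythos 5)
Your proof is correct. Note that the paper itself states Lemma~\ref{Lem:DecompositionCriterion} without proof, offering only the informal Remark on conjunction of DNFs behaving like a Cartesian product, so there is no ``official'' argument to compare against line by line; your write-up supplies exactly the details the paper omits. The forward direction is the expected one: restrict a satisfying assignment to each $V_i\cup\Delta$, use that $\psi_i$ depends only on those variables, and patch the pieces back together. The backward direction is where you go beyond the paper's hint, by explicitly building components $\psi_i$ whose satisfying assignments are the sets $\mathcal{B}_i=\{X\cup d: d\in\A|_\Delta,\ X\in\A\langle d\rangle|_{V_i}\}$ (realized as DNFs of full terms over $V_i\cup\Delta$); the key coherence point --- that the witness terms extracted from the different $\psi_i$ all carry the same $\Delta$-part because they are full terms inside a single consistent assignment $Z$ --- is correctly identified and handled, and it is precisely what licenses applying the product identity for one $d$. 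The only loose ends are edge cases you already flag (an unsatisfiable $\varphi$, and implicitly $|\pi|\geqslant 2$ so that the constructed family is a decomposition in the sense of the definition, with $\vars(\psi_i)=V_i\cup\Delta$ and $V_i\neq\varnothing$); these are benign and do not affect the argument.
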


In this paper, we are concerned with the problem of finding the finest variable partition of a positive DNF $\varphi$ wrt a subset $\Delta$ of its variables. By Lemma \ref{Lem:ComponentsAsProjections}, decomposition components of $\varphi$ can be easily obtained from the finest variable partition. We assume that boolean expressions are given as strings and thus, the size of an expression $\varphi$ is the length of the string, which represents $\varphi$.

For the sake of completeness, first we describe a factorization algorithm for multilinear boolean polynomials, which is based on the results from \cite{EmelyanovPonomaryov-2015pcs,EmelyanovPonomaryov-2014psi}. Then we provide a $\varnothing$-decomposition algorithm for a positive DNF $\varphi$ based on factorization of a boolean polynomial, which is obtained from $\varphi$ by a simple syntactic transformation. Finally, we demonstrate that $\Delta$-decomposition of a positive DNF reduces to $\varnothing$-decomposition of (a polynomial number of) positive DNFs obtained from the input one and devise the corresponding polynomial time $\Delta$-decomposition algorithm.

\section{Factorization of Boolean Polynomials}\label{Sect:Polynomials}

In \cite{ShpilkaVolkovich-2010icalp}, Shpilka and Volkovich established the prominent result on the equivalence of polynomial factorization and identity testing. It follows from their result that a multilinear boolean polynomial
can be factored in time cubic in the size of the polynomial given as a string. This result has been rediscovered in \cite{EmelyanovPonomaryov-2015pcs,EmelyanovPonomaryov-2014psi}, where the authors have provided a factorization algorithm based on the computation of derivatives of multilinear boolean polynomials, which allows for deeper optimizations. Without going into implementation details, we employ this result here to formulate an algorithm, which computes the finest $\varnothing$-decomposition components of a positive DNF $\varphi$.
Hereafter we assume that polynomials do not contain double occurrences of the same monomial.

\begin{definition}
A polynomial $F$ is \emph{factorable} if $F=G_1\cdot \ldots \cdot G_n$, where $n\geqslant 2$ and $G_1, \ldots, G_n$ are some non-constant polynomials. Otherwise $F$ is \emph{irreducible}. The polynomials $G_1, \ldots, G_n$ are called \emph{factors} of $F$.
For a polynomial $F$, the \emph{finest variable partition} of $F$ is $\{\vars(F)\}$ if $F$ is irreducible and otherwise consists of the sets of the variables of the irreducible factors of $F$.
\end{definition}

It is important to stress that we consider here multilinear polynomials (every variable can occur only in the power of $\leqslant 1$) and thus, the factors are polynomials \emph{over disjoint sets of variables}. 
Note that the finest variable partition of a multilinear boolean polynomial is unique, since the ring of these polynomials is a unique factorization domain.
We now formulate the first important observation, which is a strengthening of Theorem 5 from \cite{EmelyanovPonomaryov-2015pcs}.

\begin{theorem}[Computing Finest Variable Partition for Polynomial]\label{Teo:Factorization}
For a multilinear boolean polynomial $F$, the (unique) finest partition of the variables of $F$ can be found in time polynomial in the size of $F$. 
\end{theorem}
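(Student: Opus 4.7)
The plan is to reduce the computation of the finest variable partition directly to iterated factorization, leveraging the cubic-time factorization procedure for multilinear boolean polynomials cited just above the theorem.

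First I would record a simple structural lemma: in any factorization $F = G \cdot H$ into non-constant polynomials, multilinearity forces $\vars(G)\cap\vars(H)=\varnothing$. Indeed, if $x\in\vars(G)\cap\vars(H)$, writing $G = x g_1 + g_0$ and $H = x h_1 + h_0$ with $g_0,g_1,h_0,h_1$ free of $x$, the product $GH$ contains the term $x^2 g_1 h_1$, which cannot cancel in the polynomial ring and therefore contradicts the multilinearity of $F$. Combined with the UFD property of the ring of these polynomials (already noted in the text), this implies that the irreducible factors of $F$ are over pairwise disjoint variable sets, and by the very definition given just above, these sets constitute the finest variable partition of $F$.

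For the algorithm, I would invoke the factorization procedure of \cite{ShpilkaVolkovich-2010icalp} (equivalently the derivatives-based variant in \cite{EmelyanovPonomaryov-2015pcs,EmelyanovPonomaryov-2014psi}), which in time $O(|F|^3)$ either certifies that $F$ is irreducible or outputs a non-trivial factorization $F = F' \cdot F''$ (automatically over disjoint variable sets by the observation above). Applied recursively on $F'$ and $F''$, this builds a binary tree whose leaves are irreducible, and the algorithm returns the collection of variable sets of the leaves.

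The main obstacle is the polynomial overall bound on this recursion: a single cubic-time call is polynomial, but we need to argue that iterating does not blow up. The key estimate is that for a multilinear product over disjoint variable sets the monomial counts multiply, so $|\mathrm{mon}(F)| = |\mathrm{mon}(F')|\cdot|\mathrm{mon}(F'')|$ and consequently every intermediate factor has at most $|F|$ monomials and string size $O(|F|)$. Since each split strictly refines the current variable set, the recursion tree has at most $|\vars(F)|$ leaves and hence $O(|\vars(F)|)$ internal nodes; combined with an $O(|F|^3)$ cost per node, the total running time is polynomial in $|F|$, as required.
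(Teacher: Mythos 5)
Your proposal is correct and follows essentially the same route as the paper: invoke the polynomial-time factorization result of \cite{EmelyanovPonomaryov-2015pcs} (equivalently \cite{ShpilkaVolkovich-2010icalp}), apply it recursively to the factors, and read off the variable sets of the irreducible factors, with uniqueness coming from unique factorization. The only difference is that you make explicit two points the paper leaves implicit --- the disjointness of the factors' variable sets (via the $x^2$ coefficient argument in the ordinary polynomial ring) and the polynomial bound on the recursion (monomial counts multiply across disjoint-variable factors, and the recursion tree has at most $|\vars(F)|$ leaves) --- both of which are correct.
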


It is proved in \cite{EmelyanovPonomaryov-2015pcs} that testing whether $F$ is factorable and computing its factors can be done in time polynomial in the size of $F$ given as a string. By applying the factorization procedure to the obtained factors recursively, one obtains a partition of the variables of $F$, which corresponds to the irreducible factors of $F$. This is implemented in {\tt FindPartition} procedure given below, which is a modification of the factorization algorithm from \cite{EmelyanovPonomaryov-2015pcs}. It is also shown in \cite{EmelyanovPonomaryov-2015pcs,EmelyanovPonomaryov-2014psi} that once a partition of variables, which corresponds to the factors of $F$ is computed, the factors can be easily obtained as projections of $F$ onto the components of the partition (see the notion of projection below).

\smallskip

The {\tt FindPartition} procedure takes a boolean polynomial $F$ as an input and outputs the finest partition of $\vars(F)$ in time polynomial in the size of $F$. A few notations are required. For a polynomial $F$, we denote by $\vars(F)$ the set of the variables of $F$. For a variable $x\in \vars(F)$ and a value $a\in\{0,1\}$, we denote by $F_{x=a}$ the polynomial obtained from $F$ by substituting $x$ with $a$. 
Given a set of variables $V$ and a monomial $m$, the \emph{projection} of $m$ onto $V$ (denoted as $m|_V$) is $1$ if $m$ does not contain any variable from $V$, or is equal to the monomial obtained from $m$ by removing all the variables not contained in $V$, otherwise. The \emph{projection} of a polynomial $F$ onto $V$, denoted as $F|_V$, is the polynomial obtained by projecting the monomials of $F$ onto $V$ and by removing duplicate monomials. 

\smallskip

Lines \ref{AlgoLine:CheckConstantPolynomial}-\ref{AlgoLine:ReturnNull4ConstantPolynomial} of {\tt FindPartition} is a test for a simple sufficient condition for irreducibility: if a polynomial is a constant then it cannot be factorable. Lines \ref{AlgoLine:TrivialFactorCheckStart}-\ref{AlgoLine:TrivialFactorCheckEnd} implement a test for trivial factors: if some variable $z$ is present in every monomial of $F$, then $z$ is an irreducible factor. In the recursive part of the procedure, the remaining sets from the finest variable partition of $F$ are computed as the values of the variable $\Sigma$ and are added to {\tt FinestPartition}.


\begin{multicols}{2}

{\small
\begin{algorithmic}[1]
\Procedure{FindPartition}{$F$}

  \If{$F==0$ or $F==1$} \label{AlgoLine:CheckConstantPolynomial}
   \State \RETURN $\vars(F)$   
  \EndIf \label{AlgoLine:ReturnNull4ConstantPolynomial}
   \For{$z$ a variable occurring in every monomial of $F$}\label{AlgoLine:TrivialFactorCheckStart}
      \State FinestPartition.add($\{z\}$)
      \State $F \gets F_{z=1}$
   \EndFor
   \If{$F$ does not contain any variables}
     \State \RETURN FinestPartition
   \EndIf	 	      
   \If{$F$ contains a single variable, e.g., $x$}
     \State FinestPartition.add($\{x\}$)
     \State \RETURN FinestPartition
   \EndIf	 \label{AlgoLine:TrivialFactorCheckEnd}	


 \State $V \gets$ variables of $F$  

\smallskip

  \Repeat 
    \State $\Sigma \gets \varnothing$;  $\ \ \ F \gets F|_V$ \label{AlgoLine:Iteration}
    \State pick a variable $x$ from $V$ \label{AlgoLine:XVarChoice}
    \State $\Sigma$.add($x$); $\ \ \ V \gets V\setminus\{x\}$ 
    
    \State $G\gets F_{x=0} \cdot \frac{\partial{F}}{\partial{x}}$
 
 \smallskip
 
    \For{a variable $y$ from $V$}\label{AlgoLine:YVarChoice}
			 \If{$\frac{\partial{G}}{\partial{y}}\neq 0$}
				\State $\Sigma$.add($y$)
			 \EndIf     
     \EndFor
   \State FinestPartition.add($\Sigma$)  \label{AlgoLine:Add2Partition} 
   \State $V \gets V\setminus\Sigma$
  
  \Until $V=\varnothing$
\smallskip  
\State \RETURN FinestPartition
\smallskip
\EndProcedure
\end{algorithmic}
}

\end{multicols}

\section{$\varnothing$-decomposition of Positive DNFs}\label{Sect:0Decomp} 

A term $t$ of a DNF $\varphi$ is called \emph{redundant} in $\varphi$ if there exists another term $t'$ of $\varphi$ such that every literal of $t'$ is present in $t$ (i.e., $t'\subseteq t$). For example, the term $xy$ is redundant in $xy \vee x$. It is easy to see that removing redundant terms gives a logically equivalent DNF. 

Let us note the following simple fact: 

\begin{lemma}[Existence of Positive Components]\label{Lem:PositiveDecomposesIntoPositive}
Let $\varphi$ be a positive boolean expression and $\Delta\subseteq\vars(\varphi)$ a subset of variables. If $\varphi$ is $\Delta$-decomposable then it has decomposition components, which are positive expressions. 
\end{lemma}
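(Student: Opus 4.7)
The plan is to reduce to the positive DNF case and then invoke Lemma~\ref{Lem:ComponentsAsProjections} directly. First, I would rewrite $\varphi$ as an equivalent positive DNF $\varphi^{\ast}$ by repeatedly distributing conjunctions over disjunctions. Since $\varphi$ contains no negations, no negation is introduced by this normalization, so $\varphi^{\ast}$ is indeed a positive DNF, with $\vars(\varphi^{\ast})\subseteq \vars(\varphi)$ and $\varphi^{\ast}\equiv \varphi$.

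Next, I would observe that $\Delta$-decomposability is a semantic property: the definition refers only to logical equivalence of $\varphi$ to a conjunction $\psi_1\wedge \ldots \wedge \psi_n$ together with containment conditions on the variable sets of the components inside $\vars(\varphi)$. Consequently, if $\varphi$ is $\Delta$-decomposable with variable partition $\pi=\{V_1,\ldots,V_n\}$, then so is $\varphi^{\ast}$ (any witnessing components for $\varphi$ still witness decomposability of $\varphi^{\ast}$, since their variables remain inside $\vars(\varphi^{\ast})\cup\Delta$ up to relabelling with the partition).

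Then I would apply Lemma~\ref{Lem:ComponentsAsProjections} to $\varphi^{\ast}$ and the partition $\pi$: this produces the projections $\varphi^{\ast}|_{U_1},\ldots,\varphi^{\ast}|_{U_n}$, where $U_i=V_i\cup\Delta$, as $\Delta$-decomposition components of $\varphi^{\ast}$, hence of $\varphi$. The final observation is that the projection operation, applied to a positive DNF, yields either the constant $1$ or a DNF obtained by discarding some terms and deleting some literals from the remaining ones; neither operation can introduce a negated variable. Therefore each $\varphi^{\ast}|_{U_i}$ is a positive expression, as required.

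I do not foresee a real obstacle: the lemma on projections already yields syntactic components, and positivity is inherited automatically. The only care needed is the initial normalization into positive DNF form, together with the (standard) observation that decomposability transfers between logically equivalent expressions provided the variable-set conditions (a)--(c) in the definition are still met, which is immediate because $\vars(\varphi^{\ast})\subseteq\vars(\varphi)$.
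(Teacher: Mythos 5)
Your proof is correct, but it takes a genuinely different route from the paper's. The paper argues semantically: it invokes the classical characterization (Theorem~1.21 in Crama--Hammer) that an expression is equivalent to a positive one in $x$ iff its set of satisfying assignments is closed under flipping $\neg x$ to $x$, observes that this monotonicity property is inherited by each factor of a Cartesian decomposition $\A=\A_1\bowtie\ldots\bowtie\A_n$, and then appeals to Lemma~\ref{Lem:DecompositionCriterion}. You instead work syntactically: normalize $\varphi$ to an equivalent positive DNF $\varphi^{\ast}$ (pure distribution introduces no negations and drops no variables, so the partition conditions transfer), apply Lemma~\ref{Lem:ComponentsAsProjections} to obtain the components as projections, and note that projecting a positive DNF cannot create a negated literal. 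Your version is more constructive --- it exhibits the positive components explicitly as the projections $\varphi^{\ast}|_{U_i}$, which is precisely what the {\tt $\Delta$Decompose} procedure later computes --- and it avoids the external characterization of monotone functions, at the cost of a DNF normalization that may blow up exponentially (harmless here, since the lemma is a pure existence claim, but worth noting). The paper's argument stays at the level of satisfying assignments and needs no normalization. One small point to keep explicit in your write-up: the transfer of ``$\Delta$-decomposable with partition $\pi$'' from $\varphi$ to $\varphi^{\ast}$ relies on $\vars(\varphi^{\ast})=\vars(\varphi)$, which holds because distribution preserves every variable occurrence; your phrase ``up to relabelling'' should be replaced by this observation.
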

\begin{proof}
It is known (e.g., see Theorem 1.21 in \cite{CramaHammer-2011}) that a boolean expression $\psi$ is equivalent to a positive one in a variable $x$ iff for the set of satisfying assignments $\A$ for $\varphi$ the following property holds: if $\{l_1\lldots l_n,\neg x\}\in\A$, where $l_1\lldots l_n$ are literals, then $\{l_1\lldots l_n, x\}\in\A$. Clearly, this property is preserved under decomposition: if a set of assignments $\A$ satisfies the property and it holds that $\A=\A_1\bowtie\ldots\bowtie \A_n$, then so do the sets $\A_i$, for $i=1\lldots n$. Thus, the claim follows directly from Lemma \ref{Lem:DecompositionCriterion}.  \QED
\end{proof}

The next important observation is a strengthening of the result from \cite{EmelyanovPonomaryov-2015pcs}, which established the complexity of $\varnothing$-decomposition for positive DNFs.

\begin{theorem}[Computing the Finest Variable Partition wrt $\Delta=\varnothing$] \label{Thm:DisjointDecompositionPosDNF}
The finest variable partition of a positive DNF $\varphi$ can be computed in time polynomial in the size of $\varphi$.
\end{theorem}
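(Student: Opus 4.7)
The plan is to reduce the computation of the finest variable partition of $\varphi$ to the computation of the finest variable partition of a multilinear boolean polynomial, which is solvable in polynomial time by Theorem~\ref{Teo:Factorization}. Given a positive DNF $\varphi$, I would first remove redundant terms (an $O(|\varphi|^2)$ preprocessing step preserving logical equivalence), and then associate the multilinear polynomial $F_\varphi = t_1 + \ldots + t_m$ over $\mathbb{F}_2$, where $t_1\lldots t_m$ are the (now pairwise non-redundant) terms of $\varphi$, each read as the product of its variables. This syntactic transformation is clearly computable in polynomial time.

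The heart of the argument is to show that the finest variable partition of $\varphi$ as a positive DNF coincides with the finest variable partition of $F_\varphi$ as a multilinear boolean polynomial. For the forward direction, suppose $\varphi \equiv \psi_1 \wedge \ldots \wedge \psi_n$ witnesses $\varnothing$-decomposability with pairwise disjoint variable sets. By Lemma~\ref{Lem:PositiveDecomposesIntoPositive}, each $\psi_i$ may be taken positive, and without loss of generality non-redundant. Remark~\ref{Re_ConjunctionOfDNFs} then expresses the conjunction as a disjunction over all products $t_1 \cdot \ldots \cdot t_n$, with $t_i$ a term of $\psi_i$; disjointness of the $\vars(\psi_i)$ forces these products to be pairwise distinct and mutually non-redundant, so they coincide as a multiset with the terms of $\varphi$, yielding $F_\varphi = F_{\psi_1} \cdot \ldots \cdot F_{\psi_n}$. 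Conversely, if $F_\varphi = F_1 \cdot \ldots \cdot F_n$ is a factorization with disjoint variable sets, no monomial cancellations can occur over $\mathbb{F}_2$, so each $F_i$ already has $\{0,1\}$-coefficients and can be read back as a positive DNF $\psi_i$; another application of Remark~\ref{Re_ConjunctionOfDNFs} then confirms $\varphi \equiv \psi_1 \wedge \ldots \wedge \psi_n$.

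Together the two directions imply that the set of partitions witnessing $\varnothing$-decomposability of $\varphi$ equals the set of partitions induced by factorizations of $F_\varphi$; in particular, the finest partitions coincide and can be computed by running {\tt FindPartition} on $F_\varphi$, giving the desired polynomial-time procedure. As a by-product this also establishes uniqueness of the finest variable partition of $\varphi$, inherited from unique factorization in the polynomial ring over $\mathbb{F}_2$. I expect the main obstacle to be the converse direction of the correspondence: one must ensure that the algebraic factors $F_i$ of $F_\varphi$ really do correspond to positive DNFs whose conjunction re-assembles $\varphi$. This hinges on observing that the disjointness of factor variable sets precludes monomial cancellation in $F_1 \cdots F_n$, so the combinatorial structure of the terms of $\varphi$ is faithfully mirrored by the polynomial — a property that would fail in general for multilinear polynomials with non-disjoint factors.
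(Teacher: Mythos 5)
Your proposal takes essentially the same route as the paper: strip redundant terms, map the resulting positive DNF syntactically to a multilinear polynomial, and invoke Theorem~\ref{Teo:Factorization} via {\tt FindPartition} (the paper states exactly this reduction through the mapping $\Poly$ and delegates the details of the DNF--polynomial correspondence, which you spell out via irredundancy and the absence of monomial cancellation across variable-disjoint factors, to the cited prior work). The one point you omit is that variables which vanish when redundant terms are removed (e.g.\ $y$ in $x\vee xy$) must be re-inserted as singleton classes, since $\varphi$ does not depend on them and they therefore form trivial decomposition components; the paper's {\tt $\varnothing$Decompose} procedure adds them explicitly, and without this step your claimed coincidence of the two finest partitions fails on such inputs because $\vars(F_\varphi)$ would then be a proper subset of $\vars(\varphi)$.
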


Let $\Poly$ be a 1-1 mapping, which for a positive DNF $\varphi$ gives a multilinear boolean polynomial $\Poly(\varphi)$ over $\vars(\varphi)$ obtained by replacing the conjunction and disjunction with $\cdot$ and $+$, respectively. 
The theorem is proved by showing that decomposition components of a positive DNF $\varphi$ can be recovered from factors of a polynomial $\Poly(\psi)$ constructed for a DNF $\psi$, which is obtained from $\varphi$ by removing redundant terms. The idea is illustrated in {\tt $\varnothing$Decompose}  procedure below, which for a given positive DNF $\varphi$ computes the finest variable partition of $\varphi$. It relies on the factorization procedure from Section \ref{Sect:Polynomials} and is employed as a subroutine in $\Delta$-decomposition algorithm in Section \ref{Sect:DeltaDecomp}. The procedure uses a simple preprocessing, which removes redundant terms. The preprocessing also allows for detecting those variables (line \ref{AlgoLine:InessentialVarsStart} of the procedure) that $\varphi$ does not depend on. By the definition of decomposability, these variables are decomposition components of $\varphi$, so they are added as singleton sets into the resulting finest variable partition (at line \ref{AlgoLine:InessentialVarsEnd}).

\begin{multicols}{2}

{\small
\begin{algorithmic}[1]
\Procedure{$\varnothing$Decompose}{$\varphi$}
 
 \State FinestPartition $\gets \varnothing$ 
 \State $\psi \gets$ \Call{RemoveRedundTerms}{$\varphi$}
 
  
  \State FinestPartition $\gets $ \newline \Call{FindPartition}{$\Poly(\psi)$} \Comment see Sect. \ref{Sect:Polynomials}
  
\medskip  

\ForAll{$x\in\vars(\varphi)\setminus\vars(\psi)$} \label{AlgoLine:InessentialVarsStart}
   \State FinestPartition.add($\{x\}$) \label{AlgoLine:InessentialVarsEnd}
 \EndFor 

 \State \RETURN FinestPartition

\EndProcedure
\end{algorithmic}

\columnbreak

\begin{algorithmic}[1]
\Procedure{RemoveRedundTerms}{$\varphi$}
  \ForAll{terms $t$ in $\varphi$}
    \If{there exists a term $t'$ in $\varphi$ s.t. $t'\subseteq t$}
      \State remove $t$ from $\varphi$
    \EndIf 
 \EndFor
 \State \RETURN $\varphi$
\EndProcedure
\end{algorithmic}
}

\end{multicols}

\section{$\Delta$-decomposition of Positive DNFs}\label{Sect:DeltaDecomp}

\begin{definition}[$\Delta$-atom]
For a positive DNF $\varphi$ and a subset $\Delta\subseteq\vars(\varphi)$, the set of $\Delta$-variables of a term of $\varphi$ is called \emph{$\Delta$-atom} of $\varphi$. 
\end{definition}

Note that by definition a $\Delta$-atom can also be the empty set. Let $U$ be the set of unions of $\Delta$-atoms of $\varphi$. Given a set $X\in U$, we introduce the notation $\varphi\langle X\rangle$ as a shortcut for the DNF $\varphi[X\cup\bar{X}]$, where $\bar{X}=\{\neg x \mid x\in\Delta\setminus X\}$.

\medskip 

Let $\pi$ be a partition of $\vars(\varphi)\setminus\Delta$.  
We say that a boolean expression $\psi$ \emph{supports} $\pi$ if every set from the finest variable partition of $\psi$ wrt $\Delta$ is contained in some set from $\pi$. It is easy to see that if $\varphi$ is $\Delta$-decomposable with $\pi$, then $\varphi[X]$ supports $\pi$, for any set of literals $X$ such that $\varphi[X]$ is defined.

\medskip

We formulate two lemmas that are the key to the main result, Theorem \ref{Thm:DeltaDecompositionPosDNF}, in this section. 

\begin{lemma}[$\Delta$-Decomposability Criterion for Positive DNF]\label{Lem:DecompCriterionPosDNF}
Let $\varphi$ be a positive DNF, $\Delta\subseteq \vars(\varphi)$ a subset, and $U$ the set of unions of $\Delta$-atoms of $\varphi$. Then $\varphi$ is $\Delta$-decomposable with a variable partition $\pi$ iff $\varphi\langle X\rangle$ supports $\pi$, for all $X\in U$.
\end{lemma}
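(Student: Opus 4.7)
The forward direction is essentially immediate from the observation stated just before the lemma: if $\varphi$ is $\Delta$-decomposable with $\pi$, then $\varphi[Y]$ supports $\pi$ for every set of literals $Y$ making the substitution defined. For each $X\in U$, $X$ is a union of $\Delta$-atoms and hence contains, as a subset, at least one $\Delta$-atom $A$ of $\varphi$; the corresponding term therefore survives in $\varphi\langle X\rangle=\varphi[X\cup\bar X]$, so $\varphi\langle X\rangle$ is defined and supports $\pi$. One reads the statement as quantifying only over those $X\in U$ for which $\varphi\langle X\rangle$ is defined.

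For the converse, assume $\varphi\langle X\rangle$ supports $\pi$ for all $X\in U$. By Lemma \ref{Lem:DecompositionCriterion} it suffices to show that for every $d\in\A|_\Delta$ the equality $\A\langle d\rangle=\A\langle d\rangle|_{V_1}\bowtie\ldots\bowtie\A\langle d\rangle|_{V_{|\pi|}}$ holds. The central step is to realise $\varphi[d]$ as some $\varphi\langle X'\rangle$. I would set
\[
X' \;=\; \bigcup\,\{\,A \mid A \text{ is a }\Delta\text{-atom of } \varphi \text{ with } A\subseteq d^+\,\},
\]
where $d^+$ denotes the positive part of $d$, and note that $X'\in U$ with $X'\subseteq d^+$. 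Since $\varphi$ is positive, a term $t$ of $\varphi$ is preserved in $\varphi[d]$ iff its $\Delta$-atom is contained in $d^+$; by construction of $X'$, this is equivalent to being contained in $X'$, i.e., to being preserved in $\varphi\langle X'\rangle$. Hence the two DNFs coincide syntactically, so $\A\langle d\rangle$ equals the set of satisfying $(\vars(\varphi)\setminus\Delta)$-assignments of $\varphi\langle X'\rangle$.

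By hypothesis $\varphi\langle X'\rangle$ supports $\pi$, so its finest variable partition refines $\pi$. Grouping the irreducible $\varnothing$-decomposition components by the $V_j$ containing them yields an equivalent conjunction $\psi_1\wedge\ldots\wedge\psi_{|\pi|}$ with $\vars(\psi_j)\subseteq V_j$, and, via Remark \ref{Re_ConjunctionOfDNFs}, its set of satisfying assignments factors as the $\bowtie$-product of their projections onto $V_1\lldots V_{|\pi|}$. Applying Lemma \ref{Lem:DecompositionCriterion} in the other direction concludes the argument.

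The main technical obstacle is verifying the identity $\varphi[d]=\varphi\langle X'\rangle$; positivity of $\varphi$ is essential here, since otherwise a term could interact with the negative literals in $d$ in ways not detectable from its $\Delta$-atom alone, so a single $X'\in U$ would no longer capture the effect of substituting $d$.
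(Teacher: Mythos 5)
Your proposal is correct and follows essentially the same route as the paper: the forward direction via the observation that substitutions of a $\Delta$-decomposable DNF support $\pi$, and the converse by identifying $\varphi[d]$ with $\varphi\langle X\rangle$ for the maximal $X\in U$ lying below $d$ and then invoking Lemma \ref{Lem:DecompositionCriterion}. The only difference is that you spell out the verification of $\varphi[d]=\varphi\langle X'\rangle$ (and where positivity is used), which the paper asserts without detail.
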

\begin{proof}
$(\Rightarrow)$: Take $X\in U$. Since $\varphi$ is positive, $X$ is a consistent set of literals, $\varphi\langle X\rangle$ is defined, and clearly,  supports $\pi$.

\medskip

$(\Leftarrow)$: Let $\pi=\{V_1\lldots V_{|\pi|}\}$, $\A$ be the set of satisfying assignments for $\varphi$, and $d\in \A|_\Delta$ a $\Delta$-assignment. Then there is $X\in U$ such that $X\subseteq d$, since $\varphi$ is a DNF. Let $X$ be the maximal set from $U$ with this property. Then we have $\varphi[d]=\varphi\langle X\rangle$, so $\varphi[d]$ supports $\pi$. This yields $\A\langle d \rangle = \A\langle d \rangle|_{V_1}\bowtie\ldots\bowtie \A\langle d \rangle|_{V_{|\pi|}}$   and since $d$ was arbitrarily chosen, it follows from Lemma \ref{Lem:DecompositionCriterion} that $\varphi$ is $\Delta$-decomposable with $\pi$.  \QED
\end{proof}

\begin{lemma}[Decomposition Lemma]\label{Lem:DecompositionLemma}
Let $\varphi_1,\ldots ,\varphi_n$, where $n\geqslant 1$, be DNFs with the following property: for all $1\leqslant j,k \leqslant n$ there is a subset $I\subseteq\{1,\ldots ,n\}$, with $j,k\in I$, such that $\bigvee_{i\in I}\varphi_i$ is decomposable with $\pi$. Then so is $\varphi_1\vee\ldots\vee\varphi_n$.
\end{lemma}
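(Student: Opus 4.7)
The plan is to apply the Decomposability Criterion (Lemma \ref{Lem:DecompositionCriterion}) to $\varphi := \varphi_1 \vee \ldots \vee \varphi_n$. Let $\A$ be its set of satisfying assignments and $\pi = \{V_1, \ldots, V_m\}$. Since $\Delta = \varnothing$, the criterion reduces to verifying $\A = \A|_{V_1} \bowtie \ldots \bowtie \A|_{V_m}$. The inclusion $\subseteq$ is immediate. For the reverse, I would take an arbitrary $X \in \A|_{V_1} \bowtie \ldots \bowtie \A|_{V_m}$ and first fix, for each $t \in \{1, \ldots, m\}$, a witness $Y_t \in \A$ with $Y_t|_{V_t} = X|_{V_t}$ together with an index $j_t \in \{1,\ldots,n\}$ such that $Y_t \models \varphi_{j_t}$.

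The core idea is to construct, by induction on $s = 1, \ldots, m$, a sequence $Z^{(s)} \in \A$ with $Z^{(s)}|_{V_t} = X|_{V_t}$ for every $t \leqslant s$, starting with $Z^{(1)} := Y_1$. For the step, pick any $i_s$ with $Z^{(s)} \models \varphi_{i_s}$ and invoke the pairwise hypothesis on $i_s$ and $j_{s+1}$ to obtain an index set $I$ containing both, such that $\psi_I := \bigvee_{i \in I} \varphi_i$ is decomposable with $\pi$. Lemma \ref{Lem:ComponentsAsProjections} then yields $\psi_I \equiv \psi_I|_{V_1} \wedge \ldots \wedge \psi_I|_{V_m}$. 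Both $Z^{(s)}$ and $Y_{s+1}$ satisfy $\psi_I$ (since $i_s,j_{s+1}\in I$), so each $V_t$-projection of these assignments satisfies the corresponding $\psi_I|_{V_t}$. Now define $Z^{(s+1)}$ to agree with $Z^{(s)}$ on every $V_t$ with $t \neq s+1$ and with $Y_{s+1}$ on $V_{s+1}$; a direct check shows $Z^{(s+1)}|_{V_t} \models \psi_I|_{V_t}$ for all $t$, hence $Z^{(s+1)} \models \psi_I$, hence $Z^{(s+1)} \models \varphi_i$ for some $i \in I$, and thus $Z^{(s+1)} \in \A$. By construction $Z^{(s+1)}|_{V_t} = X|_{V_t}$ for $t \leqslant s+1$, so the inductive invariant is preserved. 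After $m-1$ iterations the assignment $Z^{(m)}$ matches $X$ on every $V_t$, so $Z^{(m)} = X$, giving $X \in \A$ as required.

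The main obstacle is that no single $I$ need work for every pair simultaneously, so one cannot combine all the $Y_t$'s in one shot. The iterative ``chaining'' I outlined is precisely what circumvents this: each step only needs the pairwise witness for the current satisfier $i_s$ and the next index $j_{s+1}$, and Lemma \ref{Lem:ComponentsAsProjections} is what makes it safe to mix the $V_t$-parts of two assignments already known to satisfy the same bridge DNF $\psi_I$. A minor technical point is to ensure that $\pi$ really partitions the variables of every $\psi_I$ used, so that the projections $\psi_I|_{V_t}$ genuinely cover all variables appearing in $\psi_I$; this is implicit in the statement that $\psi_I$ is decomposable with $\pi$.
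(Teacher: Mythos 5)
Your proof is correct, but it takes a genuinely different route from the paper's. The paper argues syntactically: it writes the conjunction of projections $\varphi|_X \wedge \varphi|_Y$ as the cross-product DNF $\bigvee_{j,k}\varphi_j|_X\,\varphi_k|_Y$ (in the spirit of Remark~\ref{Re_ConjunctionOfDNFs}) and shows that each cross term $\varphi_j|_X\,\varphi_k|_Y$ is absorbed into the expansion of the pairwise bridge $\bigvee_{i\in I}\varphi_i$, which implies $\varphi$; notably, the paper only writes this out for a two-block partition $\pi=\{X,Y\}$. You instead work semantically through the Decomposability Criterion (Lemma~\ref{Lem:DecompositionCriterion}), showing $\A \supseteq \A|_{V_1}\bowtie\ldots\bowtie\A|_{V_m}$ by an iterative chaining construction: at each step you splice the $V_{s+1}$-part of a fresh witness onto the assignment built so far, using the pairwise decomposable union $\psi_I$ (via Lemma~\ref{Lem:ComponentsAsProjections}) as the bridge that legitimizes mixing the parts of two assignments that both satisfy $\psi_I$. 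The underlying insight --- that the pairwise hypothesis supplies a bridge formula for recombining blocks --- is the same in both proofs, and for $m=2$ your single splicing step is essentially the paper's cross-term argument in assignment language. What your version buys is a direct treatment of an arbitrary number of blocks $m$, where the paper's one-shot cross-term argument does not immediately apply (no single $I$ need contain three or more indices) and would require either your kind of chaining or an induction on blocks that the paper leaves implicit. Your closing caveat --- that $\pi$ must genuinely partition $\vars(\psi_I)$ for the projections to be decomposition components --- is the same wrinkle the paper resolves in the proof of Theorem~\ref{Thm:DeltaDecompositionPosDNF} by padding each $\varphi_i$ with a redundant term, so flagging it is appropriate rather than a gap.
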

\begin{proof}
Let $\pi=\{X,Y\}$ and denote $\varphi=\varphi_1\vee\ldots \vee\varphi_n$. By Lemma \ref{Lem:ComponentsAsProjections} we need to show that $\varphi \equiv \varphi|_X \wedge \varphi|_Y$, which is equivalent to:
\begin{align}\label{Eq:decomposition}
\varphi \equiv (\varphi_1|_X\vee\ldots\vee\varphi_n|_X) \wedge  (\varphi_1|_Y\vee\ldots\vee\varphi_n|_Y)
\end{align}
Observe that the right-hand side of this equation can be written as the expression $D=\bigvee_{1\leqslant j,k \leqslant n}\varphi_j|_X\ \varphi_k|_Y$. Take any $j,k\in\{1,\ldots ,n\}$. By the condition of the lemma there is a subset $I\subseteq\{1,\ldots ,n\}$, with $j,k\in I$, such that $\bigvee_{a,b\in I}\ \varphi_a|_X \ \varphi_b|_Y\ \equiv \ \bigvee_{i\in I}\varphi_i$. That is, a disjunction of formulas from $D$ containing both, $\varphi_j|_X \ \varphi_k|_Y$ and $\varphi_k|_X \ \varphi_j|_Y$ is a equivalent to a disjunction of formulas from $\varphi$. Since the choice of $j,k$ was arbitrary, we conclude that (\ref{Eq:decomposition}) holds and thus, the lemma is proved. \QED
\end{proof}

\begin{theorem}[Computing the Finest Variable Partition wrt $\Delta$]\label{Thm:DeltaDecompositionPosDNF}
Given a positive DNF $\varphi$ and a subset $\Delta\subseteq\vars(\varphi)$, the finest variable partition of $\varphi$ wrt $\Delta$ can be computed in time polynomial in the size of $\varphi$.
\end{theorem}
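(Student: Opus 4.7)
The plan is to reduce computing the finest $\Delta$-decomposition partition of $\varphi$ to a polynomial number of calls to \texttt{$\varnothing$Decompose} from Theorem \ref{Thm:DisjointDecompositionPosDNF}. By Lemma \ref{Lem:DecompCriterionPosDNF}, a partition $\pi$ of $\vars(\varphi)\setminus\Delta$ is a $\Delta$-decomposition partition of $\varphi$ iff for every $X\in U$ the finest variable partition $\pi^{X}_{\mathrm{fin}}$ of $\varphi\langle X\rangle$ refines $\pi$. Consequently, the finest $\Delta$-decomposition partition equals the partition-lattice join $\bigvee_{X\in U}\pi^{X}_{\mathrm{fin}}$, i.e., the transitive closure of the union of the corresponding equivalence relations on $\vars(\varphi)\setminus\Delta$. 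Since $|U|$ may be exponential, the key step is to show that this join is already realized by restricting $X$ to the polynomially many unions $A_j\cup A_k$ of two $\Delta$-atoms of $\varphi$.

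Algorithmically, I enumerate the $\Delta$-atoms $A_1,\ldots,A_p$ of $\varphi$ (at most $p\leqslant|\varphi|$, since each term contributes one), build $\varphi\langle A_j\cup A_k\rangle$ for every pair $1\leqslant j\leqslant k\leqslant p$ (each has size at most $|\varphi|$ and is produced by a single linear scan of $\varphi$), invoke \texttt{$\varnothing$Decompose} to obtain its finest partition $\pi^{(j,k)}$, and then compute the transitive closure $\pi^{*}$ of $\bigcup_{j,k}\pi^{(j,k)}$ via union-find on $\vars(\varphi)\setminus\Delta$, treating variables that never appear in any $\varphi\langle A_j\cup A_k\rangle$ as singleton classes. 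By Theorem \ref{Thm:DisjointDecompositionPosDNF} each call is polynomial, so the overall cost is polynomial in $|\varphi|$.

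Correctness reduces to showing that $\pi^{X}_{\mathrm{fin}}$ refines $\pi^{*}$ for every $X\in U$; the reverse containment is immediate since pair-unions lie in $U$. Write $\varphi\langle X\rangle=\bigvee_{A_i\subseteq X}\tau_i$, where $\tau_i$ is the disjunction of the non-$\Delta$-parts of those terms of $\varphi$ whose $\Delta$-atom equals $A_i$. For every pair $j,k$ with $A_j,A_k\subseteq X$, pick the witness set $I_{j,k}=\{i : A_i\subseteq A_j\cup A_k\}\subseteq\{i : A_i\subseteq X\}$; then $\bigvee_{i\in I_{j,k}}\tau_i=\varphi\langle A_j\cup A_k\rangle$, which is by construction $\varnothing$-decomposable with $\pi^{(j,k)}$, and therefore, after coarsening, with the restriction of $\pi^{*}$ to its variables. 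Applying Lemma \ref{Lem:DecompositionLemma} to the family $\{\tau_i\}_{A_i\subseteq X}$ with these witnesses yields that $\varphi\langle X\rangle$ is $\varnothing$-decomposable with $\pi^{*}$, so $\pi^{X}_{\mathrm{fin}}$ refines $\pi^{*}$, as required. The main obstacle is precisely this pair-join reduction: one must choose the witness sets $I_{j,k}$ so that the Decomposition Lemma covers every $X\in U$ with only polynomially many pair-tests, and verify that decomposability is inherited under coarsening so that the pair-test partitions lift correctly to $\pi^{*}$; once this is secured, the polynomial complexity bound is immediate.
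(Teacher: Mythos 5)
Your proposal is correct and follows essentially the same route as the paper's proof: reduce $\Delta$-decomposability to $\varnothing$-decomposability of the substitutions $\varphi\langle X\rangle$ via Lemma~\ref{Lem:DecompCriterionPosDNF}, cover all of $U$ by the polynomially many pairwise unions of $\Delta$-atoms using the Decomposition Lemma (your explicit witness sets $I_{j,k}=\{i : A_i\subseteq A_j\cup A_k\}$ are exactly the witnesses the paper relies on implicitly), and then merge the resulting partitions by taking equivalence classes, i.e., the transitive closure that the paper computes via the relation $\sim$ and the {\tt AlignPartitions} step. The only cosmetic difference is that you handle variables missing from some $\varphi\langle A_j\cup A_k\rangle$ by declaring them singletons, whereas the paper pads each substitution with a redundant term over the missing variables; these devices serve the same purpose.
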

\begin{proof}
Let $A$ be the set of $\Delta$-atoms of $\varphi$ and $U$ consist of all unions of sets from $A$. Note that $|A|$ is bounded by the size of $\varphi$, while $|U|$ is exponential. By Lemma \ref{Lem:DecompCriterionPosDNF}, $\varphi$ is $\Delta$-decomposable with a partition $\pi$ iff  $\varphi\langle X\rangle$ supports $\pi$, for all $X\in U$. 

\medskip

For any $X\in U$, we have $\vars(\varphi\langle X\rangle)\subseteq\vars(\varphi)$.  Observe that $\varphi\langle X\rangle$ is equivalent to the DNF $\psi=\varphi\langle X\rangle\vee t$, where $t$ is a term redundant in $\psi$, $\vars(t)=\vars(\varphi)\setminus\vars(\varphi\langle X\rangle)$ (in case $\vars(t)=\varnothing$ we assume that $\psi=\varphi\langle X\rangle$) and it holds $\vars(\psi)=\vars(\varphi)$. Therefore, $\varphi\langle X\rangle$ supports $\pi$ iff $\psi$ is decomposable with $\pi$. 

\medskip

For any $X\in U$, $\varphi\langle X\rangle$ is equivalent to $\varphi_1\vee\ldots\vee\varphi_n$ where $n\geqslant 1$ and for $i=1\lldots n$, $\ \varphi_i=\varphi\langle a_i\rangle $, where $a_i\subseteq X$, a $\Delta$-atom of $\varphi$. Notice further that $\varphi\langle X\rangle$ is equivalent to $\varphi'_1\vee\ldots\vee\varphi'_n$, where $\varphi'_i=\varphi_i \vee t_i$ and $t_i$ is a redundant term as introduced above.  

\medskip

By Lemma \ref{Lem:DecompCriterionPosDNF}, if $\varphi$ is $\Delta$-decomposable with a partition $\pi$ then $\varphi\langle a_1\cup a_2\rangle$ supports $\pi$, for any $a_1,a_2\in A$. For the other direction, if $\varphi\langle a_1\cup a_2\rangle$ supports $\pi$, for any $a_1,a_2\in A$ then the condition of Lemma \ref{Lem:DecompositionLemma} holds for $\varphi'_1\vee\ldots\vee\varphi'_n$. It follows that $\varphi\langle X\rangle$ supports $\pi$, for any $X\in U$ and hence by Lemma \ref{Lem:DecompCriterionPosDNF}, $\varphi$ is $\Delta$-decomposable with $\pi$.

\medskip

By Theorem \ref{Thm:DisjointDecompositionPosDNF}, a variable partition $\sigma$, which corresponds to the finest decomposition of $\varphi\langle a_1\cup a_2\rangle$, can be found in time polynomial in the size of $\varphi\langle a_1\cup a_2\rangle$ (and hence, in the size of $\varphi$, as well). For any  variables $x,y\in\vars(\varphi)$ and a set $S\in\sigma$, if $x,y\in S$ then $x$ and $y$ cannot belong to different decomposition components of $\varphi\langle a_1\cup a_2\rangle$. 

\medskip
Let $\sim$ be an equivalence relation on $\vars(\varphi)$ such that $x\sim y$ iff there are $a_1,a_2\in A$ such that $x$ and $y$ belong to the same component of the finest variable partition of $\varphi\langle a_1\cup a_2\rangle$. Since $|A|$ is bounded by the size of $\varphi$, one can readily verify that the equivalence classes wrt $\sim$ can be computed in time polynomial in the size of $\varphi$ and are equal to its finest variable partition. \QED 
\end{proof}

We conclude the paper with a description of {\tt $\Delta$Decompose} procedure, which for a positive DNF $\varphi$ and a (possibly empty) subset $\Delta\subseteq\vars(\varphi)$ computes the finest variable partition of $\varphi$ wrt $\Delta$ and outputs $\Delta$-decomposition components, which correspond to the partition. 

\medskip

In Lines \ref{AlgoLine_DeltaAtomsStart}-\ref{AlgoLine_DeltaAtomsEnd} of the procedure, a set of $\Delta$-atoms of $\varphi$ is computed, while skipping those ones, which subsume some term of $\varphi$. Clearly, if there is a term $t$ of $\varphi$, which consists only of $d$-variables for some subset $d\subseteq\Delta$, then it holds $\varphi[d]=1$, which implies that $\varphi\langle d\rangle$ supports any partition $\pi$ of $\vars(\varphi)\setminus\Delta$ (at this point $\varphi$ necessarily contains at least $2$ non-$\Delta$-variables due to the test in line \ref{AlgoLine_CheckMore2NonDeltaVars}). Therefore, these atoms are irrelevant in computing decomposition and they can be omitted (similarly, the unions of $\Delta$-atoms in line \ref{AlgoLine_CheckDoesntCoverATerm}). 

\smallskip

Lines \ref{AlgoLine_TestPairsDeltaLitSetStart}-\ref{AlgoLine_TestPairsDeltaLitSetEnd} implement a call for computing the finest variable partition wrt the empty $\Delta$ for each DNF $\varphi\langle L\rangle$ obtained from $\varphi$ for a union $L$ of relevant $\Delta$-atoms. The result is a family of partitions, which are further aligned by computing equivalence classes on the variables of $\varphi$. This is implemented in {\tt AlignPartitions} procedure by computing connected components of a graph, in which vertices correspond to the variables of $\varphi$. 

\smallskip

Finally, in lines \ref{AlgoLine_OutputComponentsStart}-\ref{AlgoLine_OutputComponentsEnd} the decomposition components of $\varphi$ are computed as projections onto the sets of variables corresponding to the finest partition. The components are cleaned up by removing redundant terms and are sent to the output.


\begin{multicols}{2}

{\small
\begin{algorithmic}[1]
\Procedure{$\Delta$Decompose}{$\varphi$, $\Delta$}
 \State FinestPartition $\gets \varnothing$
 \State Components $\gets \varnothing$
 

\smallskip
\If{$\varphi$ contains at most one non-$\Delta$-variable} \label{AlgoLine_CheckMore2NonDeltaVars}
  \State \RETURN $\{\varphi\}$  \Comment{$\varphi$ is not $\Delta$-decomposable}
\EndIf

\medskip



 \State $\Delta$Atoms $\gets \varnothing$ 
 
 \For{every term $t$ of $\varphi$, which contains at least one non-$\Delta$-variable} \label{AlgoLine_DeltaAtomsStart}
 \vspace{-0.3cm}
    \State \hspace{-0.5cm} $\Delta$Atoms.add(the set of $\Delta$-variables of $t$)
 \EndFor \label{AlgoLine_DeltaAtomsEnd}

  
 \ForAll{$a_1, a_2$ from $\Delta$Atoms} \label{AlgoLine_TestPairsDeltaLitSetStart}
 \State     \hspace{-0.5cm} $L \gets a_1\cup a_2$
\If{there is no term $t$ in $\varphi$, whose every variable is from $L$} \label{AlgoLine_CheckDoesntCoverATerm}
     \State \hspace{-1.2cm} PartitionForL $\gets$ \Call{$\varnothing$Decompose}{$\varphi\langle L\rangle$} \Comment{see Sect. \ref{Sect:0Decomp}}
     \State \hspace{-0.9cm} PartitionFamily.add(PartitionForL)    	    	
    \EndIf 
 \EndFor  \label{AlgoLine_TestPairsDeltaLitSetEnd}
  
\State FinestPartition $\gets$ \newline \phantom{aaa} \Call{AlignPartitions}{PartitionFamily} \label{AlgoLine_CallAlignPartitions}

\If{FinestPartition.isSingleton}
           \State \RETURN $\{\varphi\}$  \Comment{$\varphi$ is not $\Delta$-decomposable}
           \columnbreak
\Else 

\For{$V \in$ FinestPartition} \label{AlgoLine_OutputComponentsStart}
	   \State \hspace{-1.3cm} $\psi\gets$\Call{RemoveRedundTerms}{$\varphi|_{V\cup\Delta}$}  \Comment{see Sect. \ref{Sect:0Decomp}}
       \State Components.add($\psi$)    
\EndFor \label{AlgoLine_OutputComponentsEnd}

 \State \RETURN Components
\EndIf
\vspace{0.2cm}
\EndProcedure
\end{algorithmic}


\bigskip

\begin{algorithmic}[1]
\Procedure{AlignPartitions}{PFamily}
 \State G$\gets\varnothing$ \Comment{a graph with vertices being vars. of $\varphi$} 
 
 \For{Partition $\in$ PFamily}
   \For{VarSet $\in$ Partition}
     \State G.add(a path involving all $x \in$ VarSet)   
    \EndFor 
  \EndFor  
  
\medskip  
  
  \State ResultPartition $\gets \varnothing$  

  \For{C a connected component of $G$}
    \State ResultPartition.add(the set of vars from C)
  \EndFor  
  
 \State \RETURN ResultPartition 
\EndProcedure
\end{algorithmic}
\medskip
} 

\end{multicols}

\bibliographystyle{splncs}
\bibliography{references}

\begin{thebibliography}{1}

\bibitem{Bioch-2010}
Bioch, J.C.:
\newblock Decomposition of {Boolean} functions.
\newblock In Crama, Y., Hammer, P.L., eds.: Boolean Models and Methods in
  Mathematics, Computer Science, and Engineering. Volume 134 of {Encyclopedia
  of Mathematics and its Applications}.
\newblock Cambridge University Press, New York, NY, USA (2010)  39--78

\bibitem{Brayton-2010}
Villa, T., K.~Brayton, R., Sangiovanni-Vincentelli, A.:
\newblock Synthesis of multilevel boolean networks.
\newblock In Crama, Y., Hammer, P.L., eds.: Boolean Models and Methods in
  Mathematics, Computer Science, and Engineering. Volume 134 of {Encyclopedia
  of Mathematics and its Applications}.
\newblock Cambridge University Press, New York, NY, USA (2010)  675--722

\bibitem{Boros}
Boros, E.:
\newblock Horn functions.
\newblock In Crama, Y., Hammer, P.L., eds.: Boolean Functions: Theory,
  Algorithms, and Applications. Volume 134 of {Encyclopedia of Mathematics and
  its Applications}.
\newblock Cambridge University Press, New York, NY, USA (2011)  269--325

\bibitem{PhDGursky}
Gursky, S.:
\newblock Special classes of boolean functions with respect to the complexity
  of their minimization.
\newblock PhD Thesis, Charles University in Prague (2014)

\bibitem{EmelyanovPonomaryov-2015pcs}
Emelyanov, P., Ponomaryov, D.:
\newblock Algorithmic issues of conjunctive decomposition of boolean formulas.
\newblock Programming and Computer Software \textbf{41}(3) (2015)  162--169

\bibitem{Ponomaryov-2008}
Ponomaryov, D.:
\newblock On decomposability in logical calculi.
\newblock Bulletin of the Novosibirsk Computing Center \textbf{28} (2008)
  111--120

\bibitem{EmelyanovPonomaryov-2014psi}
Emelyanov, P., Ponomaryov, D.:
\newblock On tractability of disjoint {AND}--decomposition of boolean formulas.
\newblock In: Proceedings of the PSI 2014: 9$^{th}$ Ershov Informatics
  Conference. Volume 8974 of Lecture Notes in Computer Science., Springer
  (2015)  92--101

\bibitem{ShpilkaVolkovich-2010icalp}
Shpilka, A., Volkovich, I.:
\newblock On the relation between polynomial identity testing and finding
  variable disjoint factors.
\newblock In: Proceedings of the 37$^{th}$ International Colloquium on
  Automata, Languages and Programming. Part 1 (ICALP 2010). Volume 6198 of
  Lecture Notes in Computer Science., Springer (2010)  408--419

\bibitem{CramaHammer-2011}
Crama, Y., Hammer, P.L.:
\newblock Boolean Functions - Theory, Algorithms, and Applications. Volume 142
  of Encyclopedia of mathematics and its applications.
\newblock Cambridge University Press (2011)

\end{thebibliography}

\end{document}